\DeclareMathOperator*{\argmax}{arg\,max}
\newcommand{\E}{\operatorname{\mathbb E}}
\newcommand{\innermid}{\;\middle\lvert\;}
\newtheorem{lemma}{Lemma}
\newtheorem{corollary}{Corollary}
\newtheorem{theorem}{Theorem}
\newtheorem{definition}{Definition}
\newtheorem{assumption}{Assumption}
\newcommand\bc[1]{\left({#1}\right)}
\newcommand\cbc[1]{\left\{{#1}\right\}}
\newcommand\abs[1]{\left|{#1}\right|}
\newcommand\brk[1]{\left\lbrack{#1}\right\rbrack}
\newcommand\norm[1]{\left\|{#1}\right\|}
\newcommand\Erw{\mathbb{E}}
\title{\LARGE \bf
Mean Field Games on Weighted and Directed Graphs \\via Colored Digraphons
}
\author{Christian Fabian, Kai Cui and Heinz Koeppl
\thanks{This work has been co-funded by the Hessian Ministry of Science and the Arts (HMWK) within the projects "The Third Wave of Artificial Intelligence - 3AI" and hessian.AI, and the LOEWE initiative (Hesse, Germany) within the emergenCITY center.}
\thanks{All authors are with the Department of Electrical Engineering and Information Technology
        Technische Universität Darmstadt,
        64287 Darmstadt, Germany
        {\tt\small \{christian.fabian, kai.cui,  heinz.koeppl\}@tu-darmstadt.de}}%
}
\begin{document}

\maketitle
\thispagestyle{empty}
\pagestyle{empty}

\begin{abstract}
The field of multi-agent reinforcement learning (MARL) has made considerable progress towards controlling challenging multi-agent systems by employing various learning methods. Numerous of these approaches focus on empirical and algorithmic aspects of the MARL problems and lack a rigorous theoretical foundation. Graphon mean field games (GMFGs) on the other hand provide a scalable and mathematically well-founded approach to learning problems that involve a large number of connected agents. In standard GMFGs, the connections between agents are undirected, unweighted and invariant over time. Our paper introduces colored digraphon mean field games (CDMFGs) which allow for weighted and directed links between agents that are also adaptive over time. Thus, CDMFGs are able to model more complex connections than standard GMFGs. Besides a rigorous theoretical analysis including both existence and convergence guarantees, we provide a learning scheme and illustrate our findings with an epidemics model and a model of the systemic risk in financial markets.
\end{abstract}



\section{Introduction}

Over the last decades, Reinforcement Learning (RL) \cite{sutton2018reinforcement} has become a key method for solving problems across various research areas, including  fields such as robotics \cite{kormushev2013reinforcement, polydoros2017survey}, communication networks \cite{qian2019survey}, and biology \cite{mahmud2018applications}. Despite these advances, many open problems remain, especially in the domain of multi-agent reinforcement learning (MARL).  
Broadly speaking, MARL \cite{busoniu2008comprehensive} is concerned with problems that involve multiple agents interacting with each other. As the number of agents grows, many of the existing solution schemes for such models become intractable. Beside these practical algorithmic issues, numerous MARL approaches focus on empirical aspects but do not provide a well-founded, mathematical analysis of the underlying methods.

A concept that aims to resolve both of the just mentioned challenges is the one of mean field games (MFG) which was introduced independently by \cite{huang2006large} and \cite{lasry2007mean}. MFGs, in general, model games where a large number of indistinguishable agents interact with each other such that each single agent has a negligible influence on the other agents. The crucial idea of MFGs is that each individual only interacts with the other agents through the mean field of agents, i.e. a probability distribution over some state space.

Although MFGs facilitate the mathematical and algorithmic solution of many otherwise intractable models (see \cite{caines2021mean} for an overview), their initial assumptions appear to be too rigid for various applications. This especially includes the fact that in many real world scenarios, individuals do not interact with the entity of all other agents but with a rather small subgroup. To mathematically model such more complicated setups, one can utilize graphon mean field games (GMFGs) \cite{caines2018graphon,caines2019graphon}.

Intuitively, GMFGs are similar to MFGs but allow agents to be connected by a graph such that only neighbors influence each other. This enables a more realistic analysis of crucial problems such as modelling the dynamics of an epidemic \cite{vizuete2020graphon}. A shortcoming of this technique is, that the underlying standard graphons \cite{lovasz2012large} are only suitable limiting objects for undirected and unweighted graphs. Therefore, scenarios that contain directed interactions, e.g. sending and receiving messages in a communication network \cite{zhang2011lyapunov}, or weighted interactions such as monetary transactions in a financial market \cite{kenett2012dependency}, cannot be depicted properly by standard GMFGs.

To address these challenges, our paper proposes the concept of colored digraphon mean field games (CDMFGs). We build our modelling approach on the mathematical concept of colored digraphons \cite{lovasz2013non} which are a suitable limiting object for sequences of weighted and directed graphs. Furthermore, our examples also outline how CDMFGs are able to depict adaptive connections between agents, i.e. when neighbors of an agent vary over time.
Our contributions can be summarized as follows: i) We introduce the novel concept of CDMFGs which provides a modelling framework for MFGs on weighted, directed, and adaptive graphs; ii) we conduct a rigorous theoretical analysis of CDMFGs and give both existence and convergence guarantees; iii) we provide a MARL learning scheme to determine equilibria in empirical applications; and iv) we demonstrate the empirical capabilities of our approach by providing examples on the systemic risk in financial markets and the spread of contagious diseases.

\section{Colored Digraphons}

This section provides a short introduction to $k$-colored digraphons. For more details, see e.g. \cite{lovasz2013non}.

The general idea of graphons \cite{lovasz2012large} is to provide a limiting object for growing sequences of graphs. Then, one can avoid large adjacency matrices and instead work with the more amenable graphons which contain the asymptotic structure of the graph sequence. A standard graphon $W$ is formally defined as a symmetric, measurable function $W \colon [0,1]^2 \to [0,1]$ which yields the limit for sequences of undirected and unweighted graphs. In order to depict directed graphs, often referred to as digraphs, we drop the symmetry condition which yields measurable functions $W \colon [0,1]^2 \to [0,1]$ called digraphons. To describe $k$-colored digraphs, i.e. digraphs where each edge is assigned one of $k$ different colors, we extend the previous definition to a tuple of $k$ digraphons $\mathbf{W} = (W^1, \ldots, W^k)$ such that $\sum_{i = 1}^k W^i = 1$ and where $\mathbf{W}$ is a called a $k$-digraphon. Note that as usual, the $k$ edge colors can be equivalently thought of as edge weights.

Based on these definitions, we can identify any $k$-colored digraph $G$ with $N$ nodes with a $k$-digraphon as follows. Let $I_1, \ldots, I_N$ be intervals of equal length that form a partition of $[0,1]$. Then, the graph $G$ can be associated with a $k$-digraphon $\mathbf{W}_G = (W_G^1, \ldots, W_G^k)$ defined by $W^h_G (x,y)= 1$ if $(x,y) \in I_i \times I_j$, $h \in \cbc{1, \ldots, k}$ and $h$ is the color of edge $i j$, and $W^h_G (x,y)= 0$ otherwise.

On the other hand, we can sample finite, random $k$-colored digraphs from a given $k$-digraphon $\mathbf{W}$. Assuming that our sampled graph is supposed to have $N$ nodes, we draw points $x_1, \ldots, x_N$ uniformly and independently from $[0,1]$ at random and interpret each value as a node. Subsequently, we color the edge between vertices $1 \leq i, j \leq N$ with color $h$ with probability $W^h (x_i, x_j)$, independently of all other edges. 
Formally, one can show that for every convergent sequence of $k$-colored digraphs there exists a limiting $k$-digraphon, see \cite[Proposition 2.1]{lovasz2013non}.

For comparing $k$-digraphons, we require a suitable norm. We start by recalling the cut norm for standard graphons $W\colon [0,1]^2 \to [0,1]$, i.e. 
\begin{align}
\norm{W}_\square \coloneqq \sup_{S,T \subseteq [0,1]} \abs{\int_{S \times T} W(x,y) \, \mathrm dx \, \mathrm dy},
\end{align}
where $S$ and $T$ range over the measurable subsets of $[0,1]$. 
The cut norm can be extended to  $k$-digraphons, i.e. for a $k$-digraphon $\mathbf{W} = (W^1, \ldots, W^k)$ we define
\begin{align}
    \norm{\mathbf{W}}_\square \coloneqq \sum_{h=1}^k \norm{W^h}_\square.
\end{align}
For the proofs in this paper, a different formulation of the above mentioned convergence concepts is useful. For a standard graphon $W\colon [0,1]^2 \to [0,1]$ and a converging sequence of standard graphons $(W_N)_N$ with $\norm{W - W_N}_\square \to 0$ we recall that this is equivalent to convergence in the operator norm $\norm{\cdot}_{L_\infty \to L_1}$, defined by
\begin{align}
    &\norm{W - W_N}_{L_\infty \to L_1} \\
    &\quad = \sup_{\norm{f}_\infty \leq 1} \int_\mathcal{I} \abs{\int_\mathcal{I} (W(x, y) - W_N(x, y)) f(y) \mathrm d y} \mathrm d x \to 0 \nonumber,
\end{align}
see \cite[Lemma 8.11]{lovasz2012large} for details. 

\begin{assumption} \label{ass:W}
The sequence of k-digraphons $(\mathbf{W}_{G_N})_{N \in \mathbb N}$ generated by the graph sequence $(G_N)_{N \in \mathbb N}$ converges in cut norm $\left\Vert \cdot \right\Vert_{\square}$  as $N \to \infty$ to some k-digraphon $\mathbf{W}$, i.e.
\begin{align} \label{eq:Wconv}
    \sum_{h=1}^k \left\Vert  W_{G_N}^h - W^h \right\Vert_{\square} \to 0
\end{align}
which implies 
$\sum_{h=1}^k \left\Vert W_{G_N}^h - W^h \right\Vert_{L_\infty \to L_1} \to 0$.
\end{assumption}

\section{Finite Agent Model}

In the following, if $Y$ is some finite set, $\mathcal{P} (Y)$ denotes the set of all probability measures on $Y$ and $\mathcal{B} (Y)$ the set of all bounded measures. 
The finite agent model (or $N$ agent model) consists of $N$ agents who can choose actions from some finite action set $\mathcal{U}$ while they are in one of the states defined by the finite state space $\mathcal{X}$. Furthermore, the game is setup on some $k$-colored digraph $G_N = (V_N, E_N)$ with $N$ nodes where the vertices represent agents while the (weighted and directed) edges indicate connections between pairs of agents. The game starts at time $t=0$ and ends at the terminal time $T \in \mathbb{N}$ where agents are allowed to act at each of the time points $t \in \mathcal{T} \coloneqq \{0, \ldots, T-1\}$. Formally, agent $i$ can choose to implement a policy $\pi^i \in \Pi \coloneqq \mathcal{P} ({\mathcal{U}})^{\mathcal{X} \times \mathcal{T}}$ where $\pi_t^i$ denotes the policy of agent $i$ at time $t$.

In contrast to models on simple graphs, a finite agent model on a $k$-colored digraph includes $2 k$ unnormalized neighborhood state distributions for each agent $i$, i.e. 
\begin{align}
	\mathbb G^i_{t, h, \textrm{out}} &\coloneqq \frac 1 {N}  \sum_{j \in V_N} \boldsymbol{1}_{\cbc{i j \textrm{ has color } h}} \delta_{X^j_t} \\
	\mathbb G^i_{t, h, \textrm{in}} &\coloneqq \frac 1 {N} \sum_{j \in V_N} \boldsymbol{1}_{\cbc{j i \textrm{ has color } h}} \delta_{X^j_t}
\end{align}
where $h$ can be any of the $k$ colors of the graph and $\delta$ denotes the Dirac measure. For notational convenience, we additionally define $\mathbb{G}_t^i \coloneqq (\mathbb G^i_{t, 1, \textrm{out}}, \ldots, \mathbb G^i_{t, k, \textrm{out}}, \mathbb G^i_{t, 1, \textrm{in}}, \ldots, \mathbb G^i_{t, k, \textrm{in}})$ such that $\mathbb{G}_t^i \in \mathcal{B} (\mathcal{X})^{2k}$.
Agents follow policies that only depend on their current state, i.e.  local Markovian feedback policies. This is taken into account by defining the model dynamics
\begin{align}
U^i_t \sim \pi^i_t ( \cdot \mid X^i_t) \quad \textrm{and}\quad X^i_{t+1} \sim P ( \cdot \mid X_t^i, U^i_t, \mathbb{G}^i_t)
\end{align}
for all $t \leq T, i \in V_N$ and $X^i_0 \sim \mu_0$ where $\mu_0$ is the initial state distribution and $P: \mathcal{X} \times \mathcal{U} \times \mathcal{B} (\mathcal{X})^{2k} \to \mathcal{P} (\mathcal{X})$ is some transition kernel. Finally, each agent's rewards are given by the function $r: \mathcal{X} \times \mathcal{U} \times \mathcal{B} (\mathcal{X})^{2k} \to \mathbb{R}$. Agents aim to competitively maximize their total expected rewards given by $J^i_N (\pi^1, \ldots, \pi^N) \coloneqq \E \left[ \sum_{t \in \mathcal{T}} r(X^i_t, U^i_t, \mathbb{G}^i_t)\right]$.

To analyze the above model, we recall the $(\epsilon, p)$-Markov-Nash equilibrium ($(\epsilon, p)$-MNE) concept, see e.g. \cite{elie2020convergence, carmona2004nash}. 
\begin{definition}
For $\epsilon, p > 0$, a $(\epsilon, p)$-MNE is a tuple $\boldsymbol{\pi} = (\pi^1, \ldots, \pi^N) \in \Pi^N$ such that $\forall i \in V'_N \subseteq V_N$
\begin{align*}
    \sup_{\pi \in \Pi} J_i^N \bc{\pi^1, \ldots,  \pi^{i -1}, \pi,  \pi^{i+1}, \ldots,  \pi^N} - J_i^N (\boldsymbol{\pi}) \leq \epsilon
\end{align*}
for some subset $V'_N$ of nodes with $\vert V'_N \vert \geq \lfloor (1-p) N \rfloor$.
\end{definition}

The advantage of the $(\epsilon, p)$-MNE is that it allows relatively small groups of agents to be out-of-equilibrium. Even in the asymptotic case, the network can contain local structures that deviate from the colored digraphon. However, taking $p \to 0$, brings us arbitrarily close to an equilibrium for all agents.

\section{Colored Digraphon Mean Field Game}

In CDMFGs, an infinite number of agents, i.e. the mean field, try to competitively maximize their expected rewards. In contrast to standard GMFGs, CDMFGs allow for directed and weighted connections between agents. This, in turn, also enables the modelling of adaptive network structures where connections change over time, which we highlight in the experiments section. CDMFGs are closely related to the finite agent models defined above and provide an increasingly accurate approximation for the $N$ agent case as $N$ increases, for details see the next sections.

The mean field of agents is represented by the interval $\mathcal{I} \coloneqq [0,1]$ where each $\alpha \in \mathcal{I}$ represents one agent. Then, we denote the space of measurable mean field ensembles by $\boldsymbol{\mathcal{M}} \coloneqq \mathcal{P}(\mathcal{X})^{\mathcal{I} \times \mathcal{T}}$ such that for any $(t, x, \boldsymbol{\mu}) \in \mathcal{T} \times \mathcal{X} \times \boldsymbol{\mathcal{M}}$ the map $\alpha \mapsto \mu_t^\alpha (x)$ is measurable. Similarly, $\boldsymbol{\Pi} \subseteq \Pi^{\mathcal{I}}$ is the space of measurable policy ensembles where measurable refers to $\alpha \mapsto \pi_t^\alpha (u \mid x)$ being measurable for all $(t, x, u, \boldsymbol{\pi}) \in \mathcal{T} \times \mathcal{X} \times \mathcal{U} \times \boldsymbol{\Pi}$.

\begin{assumption}\label{ass:pi_lipschitz}
$\boldsymbol{\pi} \in \boldsymbol{\Pi}$ is Lipschitz continuous up to a finite number of discontinuities.
\end{assumption}

Similar to the $N$ agent case, a CDMFG on a $k$-colored digraphon has $2 k$ unnormalized neighborhood state distributions for each agent $\alpha$ which are given by
\begin{align}
	\mathbb G^{\alpha, \boldsymbol{\mu}, \boldsymbol{W}}_{t, h, \textrm{out}} &\coloneqq \int_{\mathcal I} W^h(\alpha, \beta) \mu^\beta_t \, \mathrm d\beta\\
	\mathbb G^{\alpha, \boldsymbol{\mu}, \boldsymbol{W}}_{t, h, \textrm{in}} &\coloneqq  \int_{\mathcal I} W^h(\beta, \alpha) \mu^\beta_t \, \mathrm d\beta
\end{align}
where $h$ can be any of the $k$ colors.   As before, we let $\mathbb{G}_t^{\alpha, \boldsymbol{\mu}, \boldsymbol{W}}$ denote the vector of the $2 k$ neighborhood state distributions. We usually drop the dependence on $\boldsymbol{\mu}$ and $\boldsymbol{W}$ if it is clear from the context. Thus, the model dynamics are defined by
$U^\alpha_t \sim \pi^\alpha_t ( \cdot \vert X^\alpha_t)$ and $X^\alpha_{t+1} \sim P ( \cdot \vert X_t^\alpha, U^\alpha_t, \mathbb{G}^\alpha_t)$
with  $X_0^\alpha \sim \mu_0$. Agents try to competitively maximize their rewards $J_\alpha^{\boldsymbol{\mu}} (\pi^\alpha) \coloneqq \E \left[ \sum_{t \in \mathcal{T}} r(X^\alpha_t, U^\alpha_t, \mathbb{G}^\alpha_t)\right]$.

To state an equilibrium concept for CDMFGs, we require two functions $\Psi \colon \boldsymbol{\Pi} \to \boldsymbol{\mathcal{M}}$ and $\Phi \colon \boldsymbol{\mathcal{M}} \to 2^{\boldsymbol{\Pi}}$. $\Phi$ maps $\boldsymbol{\mu} \in \boldsymbol{\mathcal{M}}$ to the set of policies with $\pi^\alpha = \argmax_{\pi \in \Pi} J^{\boldsymbol \mu}_{\alpha}\bc{\pi^\alpha}$ for all $\alpha \in [0,1]$. Intuitively, this corresponds to the optimal response policies given the respective mean field $\boldsymbol{\mu}$.
$\Psi$ assigns $\boldsymbol{\pi} \in \boldsymbol{\Pi}$ to the $\boldsymbol{\mu} = \Psi (\boldsymbol{\pi}) \in \boldsymbol{\mathcal{M}}$ which is induced for all $\alpha \in [0,1]$ through the recursive equation
\begin{align}
	\mu_{t+1}^\alpha (x) = \sum_{x' \in \mathcal X} \mu_{t}^\alpha \bc{x'} \sum_{u \in \mathcal U} \pi_t^\alpha \bc{u \vert x'} P \bc{x \vert x', u, \mathbb G^\alpha_t}
\end{align}
with $\mu_0^\alpha = \mu_0$. Next, we state the concept of the colored digraphon mean field equilibrium (CDMFE) which extends existing work on MFGs, e.g. \cite{saldi2018markov}, to colored digraphons.
\begin{definition}
	A pair $\bc{\boldsymbol{\mu}, \boldsymbol{\pi}} \in \boldsymbol{\Pi} \times \boldsymbol{\mathcal M}$ is a CDMFE if and only if both $\boldsymbol{\pi} \in \Phi 
	\left(\boldsymbol{\mu}\right)$ and $\boldsymbol{\mu} = \Psi (\boldsymbol{\pi})$ hold.
\end{definition}
To ensure the existence of a CDMFE, we make a Lipschitz assumption commonly used in the literature.
\begin{assumption} \label{ass:Lip}
$r$, $P$, $W^1, \ldots, W^k$ are Lipschitz continuous with Lipschitz constants $L_r, L_P, L_{W_1}, \ldots L_{W_k} > 0$.
\end{assumption}

\begin{lemma}\label{lem:existence}
Under Assumption~\ref{ass:Lip} there exists a CDMFE.
\end{lemma}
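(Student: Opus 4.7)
The plan is to apply the Kakutani-Fan-Glicksberg fixed-point theorem to the composite set-valued map $\Gamma \colon \boldsymbol{\mathcal{M}} \to 2^{\boldsymbol{\mathcal{M}}}$ defined by $\Gamma(\boldsymbol{\mu}) \coloneqq \Psi(\Phi(\boldsymbol{\mu}))$. Any fixed point $\boldsymbol{\mu}^\ast \in \Gamma(\boldsymbol{\mu}^\ast)$ comes with some $\boldsymbol{\pi}^\ast \in \Phi(\boldsymbol{\mu}^\ast)$ satisfying $\Psi(\boldsymbol{\pi}^\ast) = \boldsymbol{\mu}^\ast$, which is exactly a CDMFE.

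First, I would equip $\boldsymbol{\mathcal{M}}$ with a metric such as $d(\boldsymbol{\mu}, \boldsymbol{\mu}') = \sum_{t \in \mathcal{T}} \int_\mathcal{I} \|\mu_t^\alpha - \mu_t'^\alpha\|_1 \, \mathrm{d}\alpha$ and verify that it is nonempty, convex, and compact, being essentially an integrated product of finite-dimensional simplices. The core continuity step is that $\boldsymbol{\mu} \mapsto \mathbb{G}_t^{\alpha, \boldsymbol{\mu}, \boldsymbol{W}}$ is continuous, which follows from boundedness and measurability of each $W^h$ via dominated convergence. Combined with Assumption~\ref{ass:Lip}, a straightforward induction on $t \in \mathcal{T}$ then yields continuity of $\Psi$ in an analogous policy metric on $\boldsymbol{\Pi}$ and joint continuity of $(\boldsymbol{\mu}, \pi^\alpha) \mapsto J_\alpha^{\boldsymbol{\mu}}(\pi^\alpha)$.

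Next, I would analyze $\Phi$. Since $\mathcal{X}$, $\mathcal{U}$, and $\mathcal{T}$ are finite, each individual $\alpha$ faces a finite-horizon MDP in which the single-agent policy space $\Pi$ is a compact convex product of simplices, so $\argmax_{\pi \in \Pi} J_\alpha^{\boldsymbol{\mu}}(\pi)$ is nonempty and upper hemicontinuous in $\boldsymbol{\mu}$ by Berge's maximum theorem. Convexity of the argmax set is standard: $\pi$ is optimal iff it puts all mass on $\argmax_u Q_t^\alpha(x, u)$ for each $(t, x)$, and this property is preserved under convex combinations. Aggregating over $\alpha$ via a measurable selection from the argmax correspondence yields an ensemble-valued map, and composing with the continuous $\Psi$ preserves nonemptiness, convexity, compactness, and upper hemicontinuity, delivering the fixed point via Kakutani-Fan-Glicksberg.

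The main obstacle is verifying that the best-response correspondence respects Assumption~\ref{ass:pi_lipschitz}, i.e., that one can select ensembles that are Lipschitz in $\alpha$ up to finitely many discontinuities. Here I would exploit Lipschitz continuity of each $W^h$: this makes $\alpha \mapsto \mathbb{G}_t^{\alpha, \boldsymbol{\mu}, \boldsymbol{W}}$ Lipschitz, and backward induction combined with the Lipschitz assumptions on $r$ and $P$ propagates Lipschitz continuity in $\alpha$ to the $Q$-values $Q_t^\alpha(x, u)$. Since $\mathcal{U}$ is finite, a pointwise greedy selection is piecewise constant in $\alpha$, with jumps confined to the finitely many thresholds where two $Q$-values coincide, so the selected $\boldsymbol{\pi}$ lies in $\boldsymbol{\Pi}$ and closes the fixed-point argument.
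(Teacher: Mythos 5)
Your route --- a Kakutani--Fan--Glicksberg fixed point for the composite correspondence $\boldsymbol{\mu} \mapsto \Psi(\Phi(\boldsymbol{\mu}))$ --- is the standard direct argument, and it is essentially what the paper delegates by citation: the paper's proof lifts the agent index into the extended state space $\mathcal X \times [0,1]$ and invokes \cite[Theorem~3.3]{saldi2018markov} (see also \cite{cui2021learning}). As written, however, your sketch has genuine gaps that this reduction is designed to avoid. First, $\boldsymbol{\mathcal M}$ is \emph{not} compact in your strong metric $d(\boldsymbol{\mu},\boldsymbol{\nu}) = \sum_{t \in \mathcal T} \int_{\mathcal I} \| \mu^\alpha_t - \nu^\alpha_t \|_1 \, \mathrm d\alpha$: bounded sequences of measurable simplex-valued functions of $\alpha$ (e.g.\ rapidly oscillating ones) have no strongly convergent subsequence, and you cannot restrict to an equi-Lipschitz subclass because best responses, and hence the induced flows, need not stay in it. The standard fix is precisely the extended-state-space view: identify each $\mu_t$ with a probability measure on $[0,1]\times\mathcal X$ and use the weak topology, where Prokhorov gives compactness. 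Second, convexity of the values of $\Psi\circ\Phi$ is asserted but not justified: $\Psi$ is a nonlinear (self-consistent, forward) map of the policy ensemble, so the image of the convex set of best responses need not be convex; moreover the full argmax set itself need not be convex (your ``mass on $\argmax_u Q$'' characterization is sufficient but not necessary for optimality). The usual remedy is to define the correspondence with dynamics frozen at the input $\boldsymbol{\mu}$ and to prove convexity of the resulting set of induced flows via joint laws/occupation measures, as in \cite{saldi2018markov}; a fixed point of that correspondence is still a CDMFE.

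Finally, your ``main obstacle'' is a red herring and the argument you give for it is incorrect: the coincidence set $\{\alpha : Q^\alpha_t(x,u) = Q^\alpha_t(x,u')\}$ of two Lipschitz functions of $\alpha$ need not consist of finitely many thresholds, so a greedy selection need not have finitely many discontinuities. Fortunately, Lemma~\ref{lem:existence} only assumes Assumption~\ref{ass:Lip}, and a CDMFE only requires a \emph{measurable} ensemble $\boldsymbol{\pi} \in \boldsymbol{\Pi}$; Assumption~\ref{ass:pi_lipschitz} is not part of the statement, so a standard measurable-selection theorem suffices and that step should simply be dropped rather than repaired. With the weak topology, the frozen-dynamics correspondence, and measurable selection in place, your Berge/Kakutani argument goes through, but at that point you have essentially re-derived the cited result rather than found a genuinely shorter alternative.
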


\begin{proof}
We consider the extended state space $\mathcal{X} \times [0,1]$ which contains both the usual state $x \in \mathcal{X}$ as well as the agent index $\alpha \in [0,1]$. Then, Lemma~\ref{lem:existence} can be deduced from \cite[Theorem~3.3]{saldi2018markov}, see also \cite[Proof of Theorem~1]{cui2021learning}.
\end{proof}

\section{Approximation via CDMFGs}

A crucial benefit of CDMFGs is that they yield an accurate approximation of finite agent models with sufficiently many agents. To relate the two model classes, we discretize a policy ensemble $\boldsymbol{\pi} \in \boldsymbol{\Pi}$ using the function $\Gamma_N \colon \boldsymbol{\Pi} \to \Pi^N$ defined by $\Gamma_N (\boldsymbol{\pi}) = (\pi_1^{\alpha_1}, \ldots, \pi^{\alpha_N}_N)$ with $\alpha_i = i/N$. Conversely, an $N$-agent policy $(\pi_1, \ldots, \pi_N) \in \Pi^N$ can be transferred to the CDMFG setup by defining $\boldsymbol{\mu}^N \in \boldsymbol{\mathcal{M}}$ and $\boldsymbol{\pi}^N \in \boldsymbol{\Pi}$ through $\mu_t^{N, \alpha} \coloneqq \sum_{i \in V_N} \boldsymbol{1}_{\cbc{\alpha \in (\frac{i-1}{N}, \frac{i}{N}]}} \cdot \delta_{X_t^i}$ and $\pi_t^{N, \alpha} \coloneqq \sum_{i \in V_N} \boldsymbol{1}_{\cbc{\alpha \in (\frac{i-1}{N}, \frac{i}{N}]}} \cdot \pi_t^i$ for all $(\alpha, t) \in \mathcal{T} \times \mathcal{I}$.

To abbreviate the following formal expressions, we define 
$\boldsymbol{\mu}_t (f) \coloneqq \int_{\mathcal{I}} \sum_{x \in \mathcal{X}} f(x,a) \mu_t^\alpha (x) \mathrm d \alpha$ for all functions $f: \mathcal{X} \times \mathcal{I} \to \mathbb{R}$. Furthermore, we introduce the policy ensemble $\Gamma_N (\boldsymbol{\pi}, i, \hat \pi) \in \boldsymbol{\Pi}$ for all $i \leq N$ and $\hat \pi \in \Pi$ which is equal to $\Gamma_N (\boldsymbol{\pi})$
except that $\pi_i^{\alpha_i}$ is substituted by $\hat \pi$.

\begin{theorem} \label{thm:muconv}
Assume that Assumptions~\ref{ass:W} and \ref{ass:pi_lipschitz} hold with $\boldsymbol \pi \in \boldsymbol \Pi$  and $\boldsymbol \mu = \Psi(\boldsymbol \pi)$. Then, we have for the $N$-agent policy $\Gamma_N (\boldsymbol{\pi}, i, \hat \pi) \in \Pi^N$, for all measurable  functions $f \colon \mathcal X \times \mathcal I \to \mathbb R$ uniformly bounded by some $M_f$ that for all $t \in \mathcal{T}$
\begin{align}
    &\E \left[ \left| \boldsymbol \mu^N_t(f) - \boldsymbol \mu_t(f) \right|  \right] \to 0 
\end{align}
uniformly over all $\hat \pi \in \Pi, i \in V_N$.
\end{theorem}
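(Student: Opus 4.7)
The plan is to proceed by induction on $t \in \mathcal{T}$. At $t=0$ the initial states $X_0^i$ are i.i.d.\ with law $\mu_0$ and $\mu_0^\alpha = \mu_0$ for every $\alpha$, so writing $\boldsymbol{\mu}_0^N(f) = \sum_{i=1}^N \int_{(i-1)/N}^{i/N} \sum_{x} f(x,\alpha)\, \delta_{X_0^i}(x)\,\mathrm d\alpha$ and applying a standard $L^1$ law of large numbers (after approximating bounded measurable $f$ in $\alpha$ by step functions on the $N$-partition, which is harmless since $\mu_0^\alpha$ is constant in $\alpha$) gives the base case, and it holds uniformly in $(\hat\pi,i)$ because $X_0^i \sim \mu_0$ independently of the policy.

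For the inductive step, assume $\mathbb{E}\!\left[|\boldsymbol{\mu}_s^N(g) - \boldsymbol{\mu}_s(g)|\right] \to 0$ uniformly in $(\hat\pi,i)$ for every $s\le t$ and every bounded measurable $g$, and first upgrade this to convergence of neighborhood distributions: for every color $h$ and bounded $\phi\colon \mathcal{X}\to \mathbb{R}$,
\begin{align*}
\frac{1}{N}\sum_{j=1}^N \mathbb{E}\!\left[\,\bigl|\mathbb{G}^{j}_{t,h,\mathrm{out}}(\phi) - \mathbb{G}^{\alpha_j,\boldsymbol{\mu},\boldsymbol{W}}_{t,h,\mathrm{out}}(\phi)\bigr|\,\right] \longrightarrow 0,
\end{align*}
and analogously for the in-distributions. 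The proof of this splits the difference into a graphon-discretization term $\frac{1}{N}\sum_k [\mathbf{1}_{\{jk\text{ color }h\}} - W^h(\alpha_j,\alpha_k)]\phi(X_t^k)$ and a mean-field-replacement term $\int W^h(\alpha_j,\beta)\bigl[\sum_x \phi(x)(\mu_t^{N,\beta}-\mu_t^\beta)(x)\bigr]\mathrm d\beta$. The first term equals an integral against $W_{G_N}^h - W^h$ (using the block representation $\mathbf{W}_{G_N}$ of $G_N$); averaging its absolute value over $j$ is exactly bounded by $\|\phi\|_\infty\cdot \|W_{G_N}^h - W^h\|_{L_\infty\to L_1}$, which vanishes by Assumption~\ref{ass:W}. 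The second term is handled by the inductive hypothesis applied to the test function $(x,\beta)\mapsto W^h(\alpha_j,\beta)\phi(x)$, together with Lipschitz continuity of $W^h$ (Assumption~\ref{ass:Lip}).

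With neighborhood convergence in hand, expand $\boldsymbol{\mu}_{t+1}^N(f) - \boldsymbol{\mu}_{t+1}(f)$ through the one-step dynamics $\mu_{t+1}^\alpha(x) = \sum_{x',u} \mu_t^\alpha(x')\pi_t^\alpha(u\mid x') P(x\mid x',u,\mathbb{G}_t^\alpha)$ and the analogous empirical update. Splitting agent by agent, each contribution is bounded by (i) the transport from $\mu_t^{N,\alpha}$ to $\mu_t^\alpha$, controlled by the inductive hypothesis, (ii) a $P$-term controlled by Lipschitz continuity of $P$ in $\mathbb{G}$ times $\|\mathbb{G}_t^{N,j}-\mathbb{G}_t^{\alpha_j,\boldsymbol{\mu}}\|$, handled in the previous paragraph, and (iii) a $\pi$-term controlled by Assumption~\ref{ass:pi_lipschitz} together with the fact that the $\alpha_j$ sit on a uniform grid of width $1/N$, so only $O(1)$ of them fall into the finitely many discontinuity cells. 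Uniformity in $(\hat\pi,i)$ comes for free: the deviating agent changes exactly one of the $N$ summands in every average quantity, and since $r$, $P$, and $\pi$ take values in bounded sets the resulting correction is $O(1/N)$, independent of $\hat\pi$ and $i$. The main obstacle is the graphon-discretization term, which requires exploiting the $L_\infty\to L_1$ equivalent of cut-norm convergence (rather than just cut-norm itself) precisely because the test function $\phi(X_t^k)$ is a uniformly bounded quantity whose fluctuations must be controlled uniformly over the first graphon argument but averaged over the second.
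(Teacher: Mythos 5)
Your overall route is essentially the paper's: induction over $t$, a law-of-large-numbers base case, and an inductive step that separates the replacement of the step-graphon $W_{G_N}$ by $W$ (via the $L_\infty \to L_1$ form of Assumption~\ref{ass:W}), the replacement of $\boldsymbol\mu^N_t$ by $\boldsymbol\mu_t$ inside the neighborhood terms (via the induction hypothesis applied to test functions of the form $(x,\beta)\mapsto W^h(\alpha,\beta)\mathbf 1_{x = x'}$, exactly the paper's $f'_{x',\alpha}$), and the policy-discretization and deviating-agent corrections. The paper organizes this as a five-term telescoping decomposition through the operator $P^{\boldsymbol\pi}_{t,\boldsymbol\mu',\mathbf W}$, whereas you first establish convergence of the neighborhood distributions and then substitute; that reorganization is cosmetic. (Like the paper's own proof, you also invoke Lipschitz continuity of $P$ and of the $W^h$, i.e.\ Assumption~\ref{ass:Lip}, even though the theorem statement lists only Assumptions~\ref{ass:W} and \ref{ass:pi_lipschitz}; the paper does the same through the constant $L_P$, so this is not a discrepancy with the paper.)

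There is, however, one missing step in your inductive step: the empirical ensemble $\boldsymbol\mu^N_{t+1}$ does not satisfy ``the analogous empirical update.'' On the $i$-th block, $\mu^{N,\alpha}_{t+1} = \delta_{X^i_{t+1}}$ is a random realization, not the deterministic one-step image of $\boldsymbol\mu^N_t$ under $\boldsymbol\pi^N$, $P$ and $W_N$. Your items (i)--(iii) bound only the deterministic discrepancies (time-$t$ transport, the $P$-term, the $\pi$-term) and never control the fluctuation of $\boldsymbol\mu^N_{t+1}(f)$ around its conditional expectation $\boldsymbol\mu^N_t P^{\boldsymbol\pi^N}_{t,\boldsymbol\mu^N_t,W_N}(f)$ given the time-$t$ states. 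This is precisely the first summand in the paper's decomposition, and it is handled there by a conditional-independence / law-of-large-numbers argument: given the time-$t$ information, the agents' transitions are independent, so the variance of the averaged fluctuation is of order $1/N$ uniformly in $\hat\pi$ and $i$. Your LLN appears only in the base case; the same concentration argument is needed at every time step, so this term must be added to your decomposition. With it included, the rest of your argument goes through as in the paper.
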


\begin{proof}
First, we define the operator $P^{\boldsymbol{\pi}}_{t, \boldsymbol{\mu}', \mathbf{W}}$ as 
\begin{align*}
    &(\boldsymbol{\mu} P^{\boldsymbol{\pi}}_{t, \boldsymbol{\mu}', \mathbf{W}})^\alpha \coloneqq  \\ &\qquad \sum_{(x, u) \in \mathcal{X} \times \mathcal{U}} \mu^\alpha (x) \pi_t^\alpha (u \mid x) P \bc{\cdot \mid x, u, \mathbb{G}_t^{\alpha, \boldsymbol{\mu}', \mathbf{W}}}
\end{align*}
and $(W \mu) (\alpha, x) \coloneqq \int_\mathcal{I} W (\alpha, \beta) \mu^\beta (x) \mathrm d \beta$ and similarly $(\mu W)  (\alpha, x) \coloneqq \int_\mathcal{I} W (\beta, \alpha) \mu^\beta (x) \mathrm d \beta$ with $(W \mu - W' \mu') (\alpha, x) \coloneqq (W \mu) (\alpha, x) - (W' \mu') (\alpha, x)$ and analogously for $(\mu W  - \mu' W' )$.

We prove the desired statement via induction similar to \cite[Theorem 2]{cui2021learning}. The induction start follows by a law of large numbers argument. For the induction step we reformulate the term of interest, i.e.
\begin{align*}
    &\E \left[ \left| \boldsymbol \mu^N_{t+1}(f) - \boldsymbol \mu_{t+1}(f) \right| \right] \\
    &\quad \leq \E \left[ \left| \boldsymbol \mu^N_{t+1}(f) - \boldsymbol \mu^N_t P^{\boldsymbol \pi^N}_{t, \boldsymbol \mu^N_t, W_N}(f) \right| \right] \\
    &\qquad + \E \left[ \left| \boldsymbol \mu^N_t P^{\boldsymbol \pi^N}_{t, \boldsymbol \mu^N_t, W_N}(f) - \boldsymbol \mu^N_t P^{\boldsymbol \pi^N}_{t, \boldsymbol \mu^N_t, W}(f) \right| \right] \\
    &\qquad + \E \left[ \left| \boldsymbol \mu^N_t P^{\boldsymbol \pi^N}_{t, \boldsymbol \mu^N_t, W}(f) - \boldsymbol \mu^N_t P^{\boldsymbol \pi}_{t, \boldsymbol \mu^N_t, W}(f) \right| \right] \\
    &\qquad + \E \left[ \left| \boldsymbol \mu^N_t P^{\boldsymbol \pi}_{t, \boldsymbol \mu^N_t, W}(f) - \boldsymbol \mu^N_t P^{\boldsymbol \pi}_{t, \boldsymbol \mu_t, W}(f) \right| \right] \\
    &\qquad + \E \left[ \left| \boldsymbol \mu^N_t P^{\boldsymbol \pi}_{t, \boldsymbol \mu_t, W}(f) - \boldsymbol \mu_{t+1}(f) \right| \right] \, .
\end{align*}
By arguments analogous to \cite{cui2021learning}, the first, third, and fifth summand can be bounded by expressions that uniformly converge to zero as $N \to \infty$. To achieve similar statements for the second and fourth term for the CDMFG setup, however, we require an additional mathematical effort. We start with the second summand. Leveraging Assumption~\ref{ass:W} and keeping in mind that $ 0\leq \mu^{N,\beta}_t(x) \leq 1$ by definition, we obtain
\begin{align*}
    &\E \left[ \left| \boldsymbol \mu^N_t P^{\boldsymbol \pi^N}_{t, \boldsymbol \mu^N_t, W_N}(f) - \boldsymbol \mu^N_t P^{\boldsymbol \pi^N}_{t, \boldsymbol \mu^N_t, W}(f) \right| \right] \qquad \qquad \qquad
\end{align*}
    \vspace*{-0.35cm}
    \begin{align*}
    &\quad \leq \xi \E \left[ \int_{\mathcal I} \max_{h \in \cbc{1, \ldots, k} } \left\Vert  (W_N^h \mu^{N}_t - W^h \mu^{N}_t) (\alpha) \right\Vert  \right. \\
    &\qquad  +   \left. \max_{h \in \cbc{1, \ldots, k} } \left\Vert (\mu^{N}_t W_N^h - \mu^{N}_t W^h)(\alpha)  \right\Vert  \, \mathrm d\alpha \right] \\
    &\quad \leq \xi |\mathcal X|\sup_{x \in \mathcal X} \E \left[ \int_{\mathcal I} \max_{h \in \cbc{1, \ldots, k} }  \right. \left\vert (W_N^h \mu^{N}_t - W^h \mu^{N}_t) (\alpha, x)  \right\vert   \\
    &\qquad + \left. \max_{h \in \cbc{1, \ldots, k} } \left\vert ( \mu^{N}_t W_N^h - \mu^{N,\beta}_t W^h) (\alpha, x) \right\vert \, \mathrm d\alpha \right] \to 0
\end{align*}
where $\xi \coloneqq k |\mathcal X| M_f L_P$. We make a useful observation by defining the uniformly bounded functions $f'_{x', \alpha}(x, \beta) = W^h(\alpha, \beta) \cdot \mathbf 1_{x=x'}$ for any $(x', \alpha, h) \in \mathcal X \times \mathcal I \times \{1, \ldots k \}$. Applying the induction assumption to them yields
\begin{align*}
    &\E \left[ \int_{\mathcal I} \left|  (W^h \mu^{N}_t - W^h \mu_t) (\alpha, x')  \right| \, \mathrm d\alpha \right]\\
    &\qquad = \int_{\mathcal I} \E \left[ \left| \boldsymbol \mu^N_t(f'_{x', \alpha}) - \boldsymbol \mu_t(f'_{x', \alpha}) \right| \right] \, \mathrm d\alpha \to 0.
\end{align*}
Note that an analogous argument yields 
$ \E \left[ \int_{\mathcal I} \left|  (\mu^{N}_t W^h - \mu_t W^h) (\alpha, x')  \right| \, \mathrm d\alpha \right] \to 0$.
Then, we can prove convergence of the fourth summand, i.e.
\begin{align*}
    &\E \left[ \left| \boldsymbol \mu^N_t P^{\boldsymbol \pi}_{t, \boldsymbol \mu^N_t, \mathbf{W}}(f) - \boldsymbol \mu^N_t P^{\boldsymbol \pi}_{t, \boldsymbol \mu_t, \mathbf{W}}(f) \right| \right] \\
    &\quad \leq  |\mathcal X| M_f \E \left[ \sup_{x,u} \int_{\mathcal I} \left| P \left(x' \mid x, u, \mathbb{G}_t^{\alpha, \boldsymbol{\mu}_t^N, \mathbf{W}} \right) \right. \right. \\
    &\left.\left. \qquad  - P \left(x' \mid x, u, \mathbb{G}_t^{\alpha, \boldsymbol{\mu}_t, \mathbf{W}} \right) \right| \, \mathrm d\alpha \right] \\
    &\quad \leq \xi |\mathcal X| \sup_{x \in \mathcal X} \E \left[ \int_{\mathcal I}  \max_{h \in \cbc{1, \ldots, k} } \right. \left\vert  (W^h \mu^{N}_t- W^h \mu_t) (\alpha, x)  \right\vert\\
    &\qquad +   \max_{h \in \cbc{1, \ldots, k} } \left\vert (\mu^{N}_t W^h)(\alpha, x) \left. (\mu_t W^h(\alpha, x))  \right\vert \, \mathrm d\alpha \right] \to 0.
\end{align*}
This concludes the induction and the proof.
\end{proof}

Theorem~\ref{thm:muconv} can be leveraged to show that an equilibrium in a CDMFG 
yields an approximate equilibrium in the corresponding finite agent model. Moreover, the estimation's accuracy increases with the number of agents in the finite model which is formalized by the next theorem.
\begin{theorem} \label{thm:approxnash}
Let $(\boldsymbol \pi, \boldsymbol \mu)$ be a CDMFE and assume that Assumptions~\ref{ass:W}, \ref{ass:pi_lipschitz}, and \ref{ass:Lip} hold. Then, for any $\varepsilon, p > 0$ there exists $N'$ such that for all $N > N'$, the policy $\Gamma_N(\boldsymbol \pi) \in \Pi^N$ is an $(\varepsilon, p)$-Markov Nash equilibrium.
\end{theorem}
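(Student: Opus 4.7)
The plan is to reduce Theorem~\ref{thm:approxnash} to Theorem~\ref{thm:muconv} via a three-term telescoping decomposition that exploits the CDMFE optimality of $\boldsymbol\pi$, with a Markov-type argument used to lift average-in-$\alpha$ convergence to a per-agent bound on $(1-p)N$ indices. First, I would fix $\varepsilon, p > 0$ and, for each $i \leq N$ with $\alpha_i = i/N$ and each deviation $\hat\pi \in \Pi$, decompose the deviation gain as
\begin{align*}
 & J_i^N\bc{\Gamma_N(\boldsymbol\pi, i, \hat\pi)} - J_i^N\bc{\Gamma_N(\boldsymbol\pi)} \\
 &\quad\leq \abs{J_i^N\bc{\Gamma_N(\boldsymbol\pi, i, \hat\pi)} - J_{\alpha_i}^{\boldsymbol\mu}(\hat\pi)} \\
 &\qquad + \bc{J_{\alpha_i}^{\boldsymbol\mu}(\hat\pi) - J_{\alpha_i}^{\boldsymbol\mu}(\pi^{\alpha_i})} \\
 &\qquad + \abs{J_{\alpha_i}^{\boldsymbol\mu}(\pi^{\alpha_i}) - J_i^N\bc{\Gamma_N(\boldsymbol\pi)}},
\end{align*}
where the middle term is non-positive because $\pi^{\alpha_i}$ is already a best response to $\boldsymbol\mu$ by the CDMFE property. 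It thus suffices to show that each outer error is at most $\varepsilon/2$ uniformly in $\hat\pi$ for all but $pN$ indices $i$.

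Next, I would bound each outer term using the Lipschitz hypotheses on $r$ and $P$ from Assumption~\ref{ass:Lip} together with a time-telescoping over $\mathcal T$, which reduces the reward discrepancy at each step to a control on $\E\brk{\abs{\mathbb G^i_t(x') - \mathbb G^{\alpha_i,\boldsymbol\mu,\mathbf W}_t(x')}}$ per state, color, and edge direction. This control itself splits into two substeps: the empirical-graph-to-digraphon transition uses Assumption~\ref{ass:W} on cut-norm convergence, while the empirical-to-limiting mean field transition invokes Theorem~\ref{thm:muconv} applied to the uniformly bounded test functions $f^{h,\textrm{out}}_{\alpha_i,x'}(x,\beta) = W^h(\alpha_i,\beta)\,\boldsymbol 1_{\cbc{x=x'}}$ and $f^{h,\textrm{in}}_{\alpha_i,x'}(x,\beta) = W^h(\beta,\alpha_i)\,\boldsymbol 1_{\cbc{x=x'}}$. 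Both substeps yield average convergence to zero in $\alpha_i$, uniform in $\hat\pi$ and $i$, after which Markov's inequality upgrades the averaged bound to a per-agent bound outside a set of at most $pN$ indices.

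A discrete Gronwall step over the finite horizon $\mathcal T$ then converts the neighborhood bound into the reward bound $\abs{J_i^N - J_{\alpha_i}^{\boldsymbol\mu}} \leq \varepsilon/2$ on at least $(1-p)N$ indices. Uniformity in $\hat\pi$ follows because $\Pi$ is a compact subset of a finite-dimensional simplex (as $\mathcal X$, $\mathcal U$, $\mathcal T$ are all finite) and the reward functional is continuous in $\hat\pi$, so a finite $\varepsilon'$-net of $\Pi$ suffices. The main obstacle I expect is maintaining \emph{both} the per-agent and uniform-in-deviation character through the telescoping chain: Theorem~\ref{thm:muconv} only yields average convergence in $\alpha_i$, so a careful bookkeeping using Markov's inequality together with compactness of $\Pi$ is what ultimately bridges this gap.
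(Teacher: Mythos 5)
Your proposal is correct and takes essentially the same route as the paper: the paper deduces the theorem from Corollary~\ref{coro:jconv} (uniform convergence of the finite-agent objectives $J_i^N$ to the mean-field objectives $J^{\boldsymbol\mu}_{i/N}$ for all but a $p$-fraction of agents), and your three-term decomposition with the CDMFE optimality making the middle term non-positive, combined with the reduction to neighborhood-distribution discrepancies handled via Assumption~\ref{ass:W} and Theorem~\ref{thm:muconv} applied to test functions of the form $W^h(\alpha,\beta)\boldsymbol 1_{\{x=x'\}}$, is exactly the argument underlying that corollary and Lemma~\ref{lem:xconv}. Your additional $\varepsilon$-net/compactness step for uniformity in $\hat\pi$ is harmless but not needed, since the trajectory-comparison bounds are already uniform in the deviating policy.
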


\begin{proof}
The result follows from Corollary~\ref{coro:jconv} in the Appendix.
\end{proof}

\section{Examples}
To verify our model numerically, we consider a number of example tasks fulfilling Assumption~\ref{ass:Lip}. For our examples, we consider the unweighted neighborhood state distributions
\begin{align}
	\mathbb G^i_{t, \textrm{out}} &\coloneqq \frac 1 {N} \sum_{h = 1}^k c_{h, \textrm{out}} \sum_{j \in V_N} \boldsymbol{1}_{\cbc{i j \textrm{ has color } h}} \delta_{X^j_t} \label{eq:simplifiedGout} \\
	\mathbb G^i_{t, \textrm{in}} &\coloneqq \frac 1 {N} \sum_{h = 1}^k  c_{h, \textrm{in}} \sum_{j \in V_N} \boldsymbol{1}_{\cbc{j i \textrm{ has color } h}} \delta_{X^j_t}\label{eq:simplifiedGin} 
\end{align}
where the coefficients $c_{h, \textrm{out}}$ and $c_{h, \textrm{in}}$ can be understood as the strength of the respective edge color. 

\paragraph{Epidemics}
For the epidemiological SIS problem from \cite{cui2021approximately} where agents may protect themselves ($D$) from costly infection ($I$), we endow the model with additional directionality of virus spread along directed edges. Therefore, formally $\mathcal X = \{S, I\}$, $\mathcal U = \{\bar D, D\}$, $\mu_0(I) = 0.5$, $T = 50$, $r \bc{X_{t}^i, U_{t}^i, \mathbb G^i_t} = - 2 \cdot \mathbf 1_{\{X_{t}^i = I\}} - 0.5 \cdot \mathbf 1_{\{U_{t}^i = D\}}$ and $P(S \mid I, U_{t}^i, \mathbb G^i_t) = 0.2$, $P(I \mid S, U_{t}^i, \mathbb G^i_t) = 0.8 \mathbb G^i_{t, \textrm{in}}(I) \cdot \mathbf 1_{\{U_{t}^i = \bar D\}}$.

\paragraph{Beach}
In the adjusted Beach example (e.g. \cite{perrin2020fictitious}), many people on the beach adjust their position to be both close to a bar and away from (incoming) disliked neighbors. There are $10$ locations $\mathcal X = \{0, 1, \ldots, 9\}$ with bar location $B = 5$, movement actions $\mathcal U = \{-1, 0, 1\}$, $\mu_0 = \mathrm{Unif}(\mathcal X)$, $T = 30$, $r \bc{X_{t}^i, U_{t}^i, \mathbb G^i_t} = \frac{1}{5} \abs{B - X_{t}^i} + \frac{1}{5} \abs{U_{t}^i} - \mathbb G^i_{t, \textrm{in}} (x)$ and $X_{t+1}^i = X_{t}^i + U_{t}^i + \epsilon_t^i$, where the random noise $\epsilon_t^i$ is equal to $-1$ or $1$ with probabilities $0.05$, and otherwise $0$.

\paragraph{Systemic Risk}
In the context of an interbanking loan market, the weights $c_{h, \textrm{out}}$, $c_{h, \textrm{in}}$ indicate the amount of money one financial institution lends to another one.
If a bank is in state $x \in \mathcal X \coloneqq \cbc{h, \ell, b}$, its liquidity is defined as 
\begin{align*}
    c_x \coloneqq  \xi_x
    + \left( \sum_{x' \in \mathcal{X}} \mathbb G^i_{t, \textrm{in}} (x') \cdot w_{x'} - \mathbb G^i_{t, \textrm{out}} (x') \cdot (3 - w_{x'}) \right)
\end{align*}
where each state $x$ is assigned an initial level of capital endowment $\xi_x \in \mathbb{R}$. For notational convenience we  define $\sigma_{c, x} \coloneqq \max \{-1, \min\{1, c_x \}\}$ for all states $x \in \mathcal{X}$. Banks can choose to keep their capital at it's current level, raise or decrease it which is formalized by $\mathcal{U} \coloneqq \{k, r, d \}$.
We model negative external shocks, e.g. wars, by including a parameter $1 > \beta > 0$ in the transition probabilities. Intuitively, a bank is more likely to be in a high liquidity state in $t+1$ if it already is there in $t$ or if it decides to raise it's liquidity.

Thus, for $(i, j) \in \{h, \ell, b \}^2$ define a $3 \times 3$ transition matrix $M (\alpha, t, k)$ for choosing action $u = k$ by  $M_{i, h} (\alpha, t, k) \coloneqq (1 + \sigma_{c, i})/4$ and $M_{i, \ell} (\alpha, t, k) \coloneqq (1 - 0.25 \sigma_{c, i} + \beta \cdot \boldsymbol{1}_{\{ i = h \}})/4$ and $M_{i, b} (\alpha, t, k) \coloneqq (2 - 0.75 \sigma_{c, i} - \beta \cdot \boldsymbol{1}_{\{ i = h \}})/4$.

If a bank chooses to raise it's capital, the respective transition matrix $M (\alpha, t, r)$ is defined by $M_{h, j} (\alpha, t, r) \coloneqq M_{h, j} (\alpha, t, k)$ and $M_{\ell, j} (\alpha, t, r) \coloneqq (1 - \lambda) M_{\ell, j} (\alpha, t, k) + \lambda( M_{h, j} (\alpha, t, k) - \beta \boldsymbol{1}_{\{ j = \ell \}} + \beta \boldsymbol{1}_{\{ j = b \}})$ and $M_{b, j} (\alpha, t, r) \coloneqq (1 - \lambda) M_{b, j} (\alpha, t, k) + \lambda M_{\ell, j} (\alpha, t, k)$.

Finally, if a bank implements action $d$, the transition matrix $M (\alpha, t, d)$ is defined by $M_{h, j} (\alpha, t, d) \coloneqq (1 - \lambda) M_{h, j} (\alpha, t, k) + \lambda (M_{\ell, j} (\alpha, t, d) + \beta \boldsymbol{1}_{\{ j = \ell \}} - \beta \boldsymbol{1}_{\{ j = b \}})$ and $M_{\ell, j} (\alpha, t, d) \coloneqq (1 - \lambda) M_{\ell, j} (\alpha, t, k) + \lambda M_{b, j} (\alpha, t, k)$ and $M_{b, j} (\alpha, t, d) \coloneqq M_{b, j} (\alpha, t, k)$.

Banks try to maximize	$J \bc{u} := \Erw \brk{ - \sum_{t = 0}^T k_h \mathbf{1}_h + k_b \mathbf{1}_b}$
where $k_b \gg k_h$ are the costs for being in the respective state. In our experiments, we use $\beta = 0.6$, $\mu_0(h) = \mu_0(l) = 0.5$, $T=50$, $\lambda = 0.1$, $k_b = 10k_h = 0.4$, $(\xi_h, \xi_l, \xi_b) = (1, 0, -1)$ and $(w_h, w_l, w_b) = (1, 0, -1)$.

\paragraph{Adaptive versions}
We also introduce adaptive versions of SIS and Systemic Risk by changing the underlying graph structure (for simplicity at $t^* = T/2$) via time-dependent coefficients $c_{h, \textrm{out}} = c_{h, \textrm{in}} = \mathbf 1_{\{t \in ((h-2)T/2, (h-1)T/2] \}}$ in \eqref{eq:simplifiedGout} and \eqref{eq:simplifiedGin}. For adaptive SIS, we adjust $P(I \mid S, U_{t}^i, \mathbb G^i_t) = \min(1, 1.6 \mathbb G^i_t(I) \cdot \mathbf 1_{\{U_{t}^i = \bar D\}})$.

\begin{figure}[tb]
    \centering
    \vspace{0.2cm}
    \includegraphics[width=0.95\linewidth]{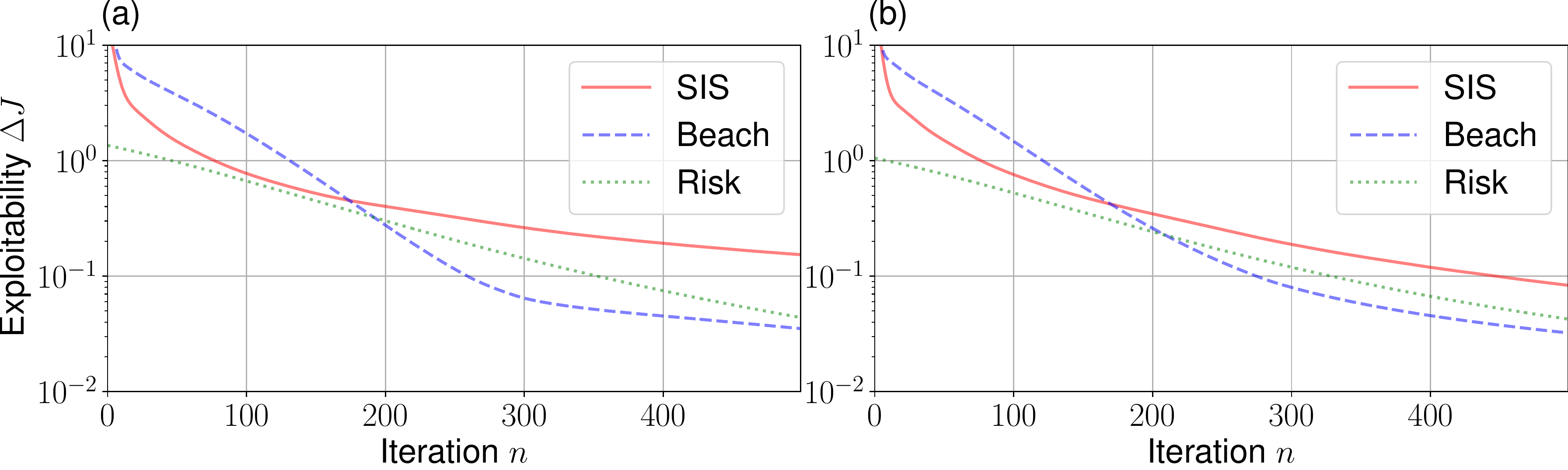}
    \caption{The (approximate) exploitability of policies $\Delta J(\pi) \coloneqq \int_{\mathcal I} \sup_{\pi^* \in \Pi} J^{\Psi(\pi)}_\alpha(\pi^*) - J^{\Psi(\pi)}_\alpha(\pi) \, \mathrm d\alpha$ at each iteration $n$ of the online mirror descent algorithm is decreasing. (a): The rotated uniform attachment digraphon scenario; (b): The ranked attachment version. }
    \label{fig:expl}
\end{figure}

\begin{figure}[tb]
    \centering
    \includegraphics[width=0.95\linewidth]{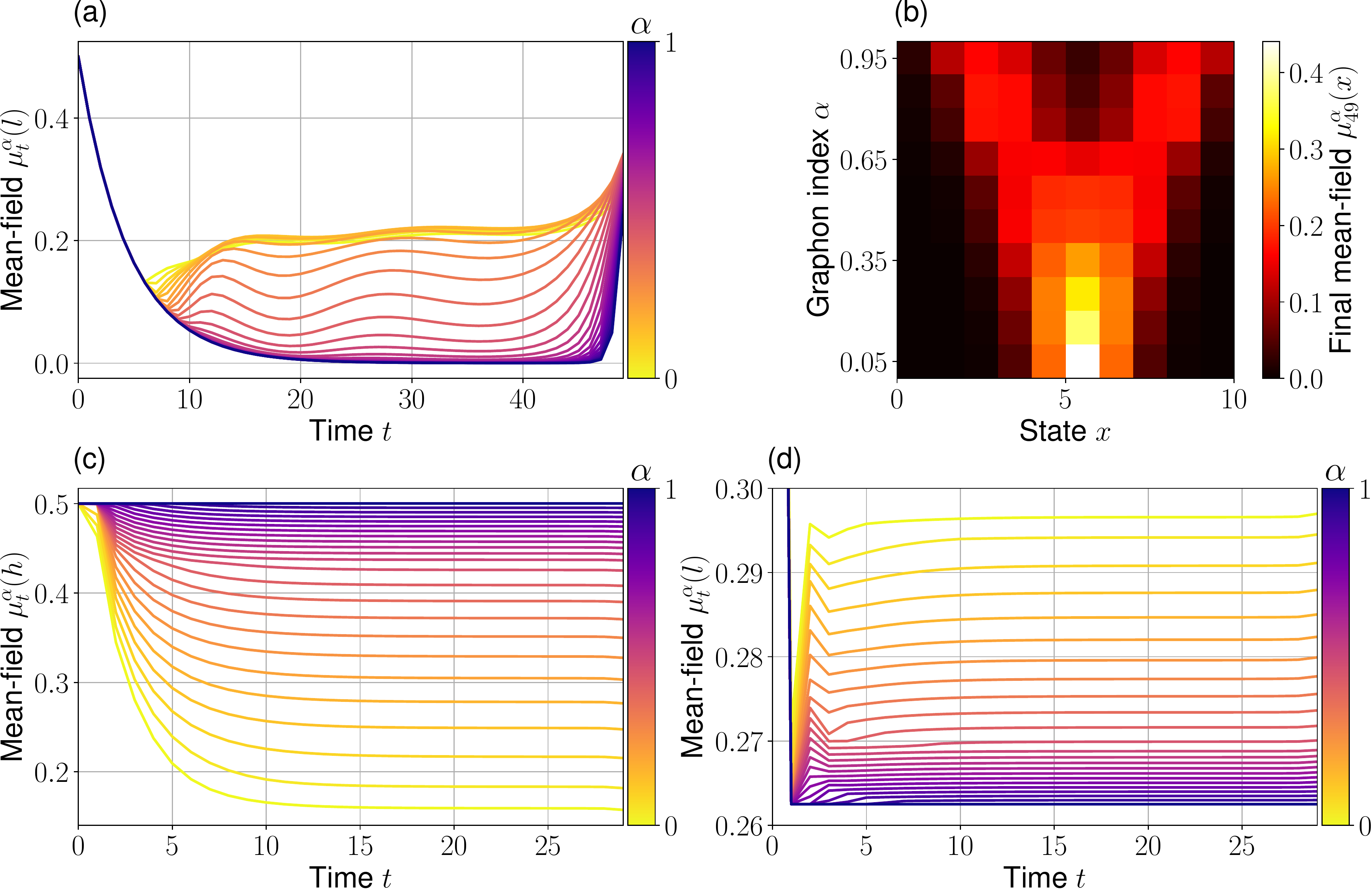}
    \caption{The limiting mean-field (state marginals) for various graphon indices $\alpha \in [0,1]$ in the combined uniform ranked attachment scenario. (a): SIS; (b): Beach; (c-d): Systemic Risk.}
    \label{fig:mf}
\end{figure}

\begin{figure}[tb]
    \centering
    \vspace{0.05cm}
    \includegraphics[width=0.95\linewidth]{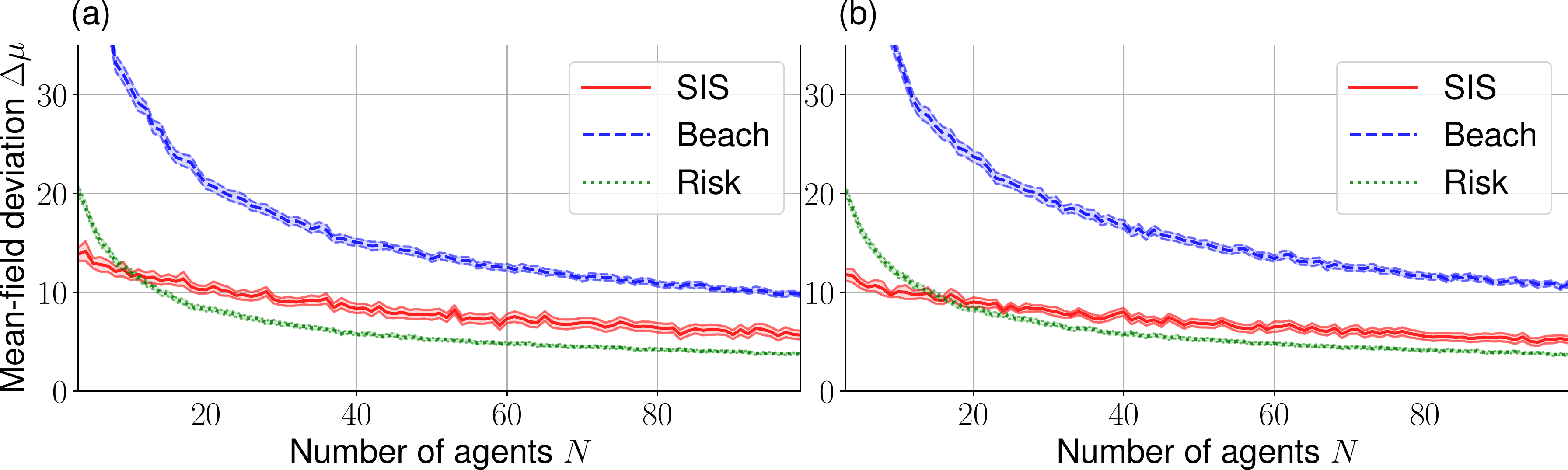}
    \caption{The estimated deviation between empirical and limiting mean field $\Delta \mu \coloneqq \mathbb E \left[ \sum_{t \in \mathcal T, x \in \mathcal X} \left| \frac 1 N \sum_{i} \delta_{X^i_t}(x) - \int_{\mathcal I} \mu^\alpha_t(x) \, \mathrm d\alpha \right| \right]$ and its $95\%$ confidence interval over the number of nodes $N$, for each averaged over $100$ random samples. (a): The rotated uniform attachment digraphon scenario; (b): The ranked attachment version.}
    \label{fig:conv-mf}
\end{figure}

\begin{figure}[tb]
    \centering
    \vspace{0.2cm}
    \includegraphics[width=0.95\linewidth]{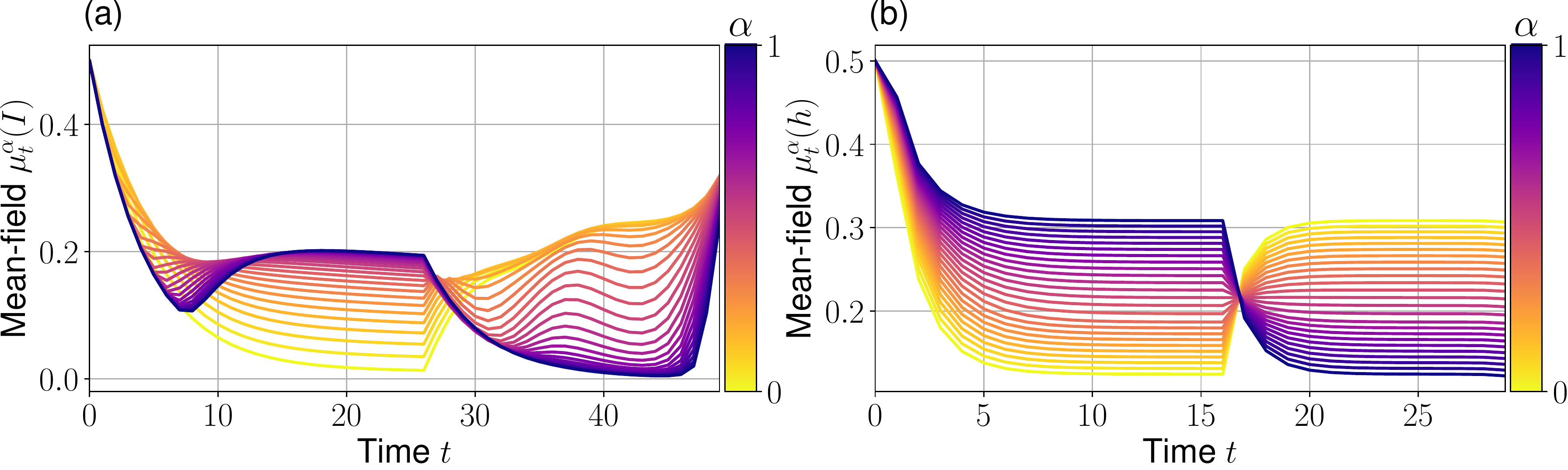}
    \caption{The limiting mean-field (state marginals) for various graphon indices $\alpha \in [0,1]$ in the adaptive graph case. (a): The SIS example in the combined uniform ranked attachment digraphon scenario; (b): The Systemic Risk example in the double rotated uniform attachment digraphon scenario.}
    \label{fig:adaptive}
\end{figure}

\section{Experiments}
To verify the usefulness of our proposed models, we use the online mirror descent algorithm \cite{perolat2021scaling} and discretization of the graphon index space $[0,1]$ \cite{cui2021learning} to numerically find CDMFE for the aforementioned examples and the following digraphons. We consider the rotated uniform attachment digraphon scenario resulting from the uniform attachment process (see e.g. \cite{lovasz2012large}) with added directionality
\begin{align*}
    W^1 = 1-W^2, W^2(x,y) = W_{\mathrm{r-unif}}(x,y) \coloneqq 1 - x(1-y),
\end{align*}
the double rotated uniform attachment digraphon scenario
\begin{align*}
    W^1 = 1-W^2-W^3, W^2 = \frac 1 2 W_{\mathrm{r-unif}}, W^3 = \frac 1 2 W_{\mathrm{l-unif}}
\end{align*}
where $W_{\mathrm{l-unif}}(x,y) \coloneqq 1 - (1-x)y$, and the combined uniform ranked attachment digraphon scenario
\begin{align*}
    W^1 = 1-W^2-W^3, W^2 = \frac 1 2 W_{\mathrm{r-unif}}, W^3 = \frac 1 2 W_{\mathrm{r-rank}}
\end{align*}
where $W_{\mathrm{r-rank}} \coloneqq 1 - \max(x,(1-y))$, with weights $c_{h, \textrm{out}} = c_{h, \textrm{in}} = h-1$ unless stated otherwise.

As can be seen in Fig.~\ref{fig:expl}, the numerical approach converges to zero in approximate exploitability \cite{cui2021learning}, indicating good convergence to a CDMFE. The resulting behavior is intuitive as shown in Fig.~\ref{fig:mf}: For the SIS problem, agents with less incoming edges (low $\alpha$) are less likely to protect themselves due to the lower perceived risk, leading to infections; In the Beach scenario, agents with more incoming connections will spread out further away from the bar to avoid other agents, while agents with little to no incoming connections will stay near the bar; Finally, in the systemic risk scenario, banks borrowing more money from other banks are more likely to keep low capital reserves. In Fig.~\ref{fig:conv-mf}, we can see that the difference between the empirical mean-field and the limiting mean-field converges to zero as the finite graph system becomes sufficiently big, supporting Theorem~\ref{thm:muconv}. Lastly, in Fig.~\ref{fig:adaptive}, for the adaptive scenario we can see analogous behavior to the one seen in Fig.~\ref{fig:mf} with qualitative reversals in behavior at $t=T/2$ when the set of active edges shifts.

\section{Conclusion}

In this paper we have constructed and analyzed CDMFGs by providing both rigorous mathematical results as well as illustrating our theory on different examples. CDMFGs extend the existing GMFG theory to weighted and directed graphs and thereby enable the scalable analysis of many problems that are outside the scope of standard GMFGs. Furthermore, CDMFGs yield a framework to depict adaptive graph structures, as we have demonstrated in the experiments. For future work, possible extensions of CDMFGs are manifold and could include partial observability, continuous time, state, and action spaces as well as using CDMFGs in applications. We hope that our work inspires future MFG research and facilitates the analysis of many real-world problems.

\bibliography{references}

\begin{thebibliography}{10}
\providecommand{\url}[1]{#1}
\csname url@samestyle\endcsname
\providecommand{\newblock}{\relax}
\providecommand{\bibinfo}[2]{#2}
\providecommand{\BIBentrySTDinterwordspacing}{\spaceskip=0pt\relax}
\providecommand{\BIBentryALTinterwordstretchfactor}{4}
\providecommand{\BIBentryALTinterwordspacing}{\spaceskip=\fontdimen2\font plus
\BIBentryALTinterwordstretchfactor\fontdimen3\font minus
  \fontdimen4\font\relax}
\providecommand{\BIBforeignlanguage}[2]{{%
\expandafter\ifx\csname l@#1\endcsname\relax
\typeout{** WARNING: IEEEtran.bst: No hyphenation pattern has been}%
\typeout{** loaded for the language `#1'. Using the pattern for}%
\typeout{** the default language instead.}%
\else
\language=\csname l@#1\endcsname
\fi
#2}}
\providecommand{\BIBdecl}{\relax}
\BIBdecl

\bibitem{sutton2018reinforcement}
R.~S. Sutton and A.~G. Barto, \emph{Reinforcement learning: An
  introduction}.\hskip 1em plus 0.5em minus 0.4em\relax MIT press, 2018.

\bibitem{kormushev2013reinforcement}
P.~Kormushev, S.~Calinon, and D.~G. Caldwell, ``Reinforcement learning in
  robotics: Applications and real-world challenges,'' \emph{Robotics}, vol.~2,
  no.~3, pp. 122--148, 2013.

\bibitem{polydoros2017survey}
A.~S. Polydoros and L.~Nalpantidis, ``Survey of model-based reinforcement
  learning: Applications on robotics,'' \emph{J. Intell. Robot. Syst.},
  vol.~86, no.~2, pp. 153--173, 2017.

\bibitem{qian2019survey}
Y.~Qian, J.~Wu, R.~Wang, F.~Zhu, and W.~Zhang, ``Survey on reinforcement
  learning applications in communication networks,'' \emph{J. Commun. Inf.
  Netw.}, vol.~4, no.~2, pp. 30--39, 2019.

\bibitem{mahmud2018applications}
M.~Mahmud, M.~S. Kaiser, A.~Hussain, and S.~Vassanelli, ``Applications of deep
  learning and reinforcement learning to biological data,'' \emph{IEEE Trans.
  Neural Netw. Learn. Syst.}, vol.~29, no.~6, pp. 2063--2079, 2018.

\bibitem{busoniu2008comprehensive}
L.~Busoniu, R.~Babuska, and B.~De~Schutter, ``A comprehensive survey of
  multiagent reinforcement learning,'' \emph{IEEE Trans. Syst. Man Cybern. C
  (Appl. Rev.)}, vol.~38, no.~2, pp. 156--172, 2008.

\bibitem{huang2006large}
M.~Huang, R.~P. Malham{\'e}, and P.~E. Caines, ``Large population stochastic
  dynamic games: closed-loop mckean-vlasov systems and the nash certainty
  equivalence principle,'' \emph{Commun. Inf. Syst.}, vol.~6, no.~3, pp.
  221--252, 2006.

\bibitem{lasry2007mean}
J.-M. Lasry and P.-L. Lions, ``Mean field games,'' \emph{Japanese J. Math.},
  vol.~2, no.~1, pp. 229--260, 2007.

\bibitem{caines2021mean}
P.~E. Caines, ``Mean field games,'' in \emph{Encycl. Syst. Control}.\hskip 1em
  plus 0.5em minus 0.4em\relax Springer, 2021, pp. 1197--1202.

\bibitem{caines2018graphon}
P.~E. Caines and M.~Huang, ``Graphon mean field games and the gmfg equations,''
  in \emph{IEEE Conf. Decis. Control}, 2018, pp. 4129--4134.

\bibitem{caines2019graphon}
------, ``Graphon mean field games and the gmfg equations: $\varepsilon$-nash
  equilibria,'' in \emph{IEEE Conf. Decis. Control}, 2019, pp. 286--292.

\bibitem{vizuete2020graphon}
R.~Vizuete, P.~Frasca, and F.~Garin, ``Graphon-based sensitivity analysis of
  sis epidemics,'' \emph{IEEE Control Syst. Lett.}, vol.~4, no.~3, pp.
  542--547, 2020.

\bibitem{lovasz2012large}
L.~Lov{\'a}sz, \emph{Large networks and graph limits}.\hskip 1em plus 0.5em
  minus 0.4em\relax Am. Math. Soc., 2012, vol.~60.

\bibitem{zhang2011lyapunov}
H.~Zhang, F.~L. Lewis, and Z.~Qu, ``Lyapunov, adaptive, and optimal design
  techniques for cooperative systems on directed communication graphs,''
  \emph{IEEE Trans. Ind. Electron.}, vol.~59, no.~7, pp. 3026--3041, 2011.

\bibitem{kenett2012dependency}
D.~Y. Kenett, T.~Preis, G.~Gur-Gershgoren, and E.~Ben-Jacob, ``Dependency
  network and node influence: Application to the study of financial markets,''
  \emph{Int. J. Bifurc. Chaos}, vol.~22, no.~07, p. 1250181, 2012.

\bibitem{lovasz2013non}
L.~Lov{\'a}sz and K.~Vesztergombi, ``Non-deterministic graph property
  testing,'' \emph{Comb. Probab. Comput.}, vol.~22, no.~5, pp. 749--762, 2013.

\bibitem{elie2020convergence}
R.~Elie, J.~Perolat, M.~Lauri{\`e}re, M.~Geist, and O.~Pietquin, ``On the
  convergence of model free learning in mean field games,'' in \emph{Proc. AAAI
  Conf. Artif. Intell.}, vol.~34, 2020, pp. 7143--7150.

\bibitem{carmona2004nash}
G.~Carmona, ``Nash equilibria of games with a continuum of players,'' 2004.

\bibitem{saldi2018markov}
N.~Saldi, T.~Basar, and M.~Raginsky, ``Markov--nash equilibria in mean-field
  games with discounted cost,'' \emph{SIAM J. Control Optim.}, vol.~56, no.~6,
  pp. 4256--4287, 2018.

\bibitem{cui2021learning}
K.~Cui and H.~Koeppl, ``Learning graphon mean field games and approximate nash
  equilibria,'' in \emph{Proc. ICLR}, 2022, pp. 1--31.

\bibitem{cui2021approximately}
------, ``Approximately solving mean field games via entropy-regularized deep
  reinforcement learning,'' in \emph{Proc. AISTATS}, 2021, pp. 1909--1917.

\bibitem{perrin2020fictitious}
S.~Perrin, J.~P{\'e}rolat, M.~Lauri{\`e}re, M.~Geist, R.~Elie, and O.~Pietquin,
  ``Fictitious play for mean field games: Continuous time analysis and
  applications,'' \emph{Proc. NeurIPS}, vol.~33, pp. 13\,199--13\,213, 2020.

\bibitem{perolat2021scaling}
J.~P{\'e}rolat, S.~Perrin, R.~Elie, M.~Lauri{\`e}re, G.~Piliouras, M.~Geist,
  K.~Tuyls, and O.~Pietquin, ``Scaling mean field games by online mirror
  descent,'' in \emph{Proc. AAMAS}, 2022, pp. 1028--1037.

\end{thebibliography}
\bibliographystyle{IEEEtran}

\section{Appendix}

Define $\forall t \in \mathcal T$ the dynamics with
\begin{align}
    \hat U^{\frac i N}_t \sim \hat \pi_t(\cdot \mid \hat X^{\frac i N}_t), \, \hat X^{\frac i N}_{t+1} \sim P(\cdot \mid \hat X^{\frac i N}_t, \hat U^{\frac i N}_t, \mathbb G^{\frac i N}_t)
\end{align}
and $\hat X^{\frac i N}_0 \sim \mu_0$ for all agents $i$.

\begin{lemma} \label{lem:xconv}
 Assume that Assumptions~\ref{ass:W}, \ref{ass:pi_lipschitz}, and \ref{ass:Lip} hold with $\boldsymbol \pi \in \boldsymbol \Pi$ and $\boldsymbol \mu = \Psi(\boldsymbol \pi)$. For $\Gamma_N(\boldsymbol \pi), \Gamma (\boldsymbol{\pi}, i, \hat \pi) \in \Pi^N$ with arbitrary $\hat \pi \in \Pi$, and for any uniformly bounded family $\mathcal G$ of functions $g: \mathcal X \to \mathbb R$ and any $\varepsilon, p > 0$, $t \in \mathcal T$, there exists $N' \in \mathbb N$ such that for all $N > N'$ we have
\begin{align} \label{eq:xconv}
    \sup_{g \in \mathcal G} \left| \E \left[ g(X^i_{t}) \right] - \E \left[ g(\hat X^{\frac i N}_{t}) \right] \right| < \varepsilon
\end{align}
uniformly over $\hat \pi \in \Pi, i \in V'_N$ for some $V'_N \subseteq V_N$ with $|V'_N| \geq \left\lfloor (1-p) N \right\rfloor$. Under the same conditions,
\begin{align} \label{eq:xmuconv}
    &\sup_{h \in \mathcal H} \left| \E \left[ h \left(X^i_{t}, \mathbb{G}_t^{\frac i N, \boldsymbol{\mu}_t^N, \mathbf{W_N}} \right) - h \left(\hat X^{\frac i N}_{t}, \mathbb{G}_t^{\frac i N, \boldsymbol{\mu}_t, \mathbf{W}} \right) \right] \right| < \varepsilon
\end{align}
holds for any uniformly Lipschitz, uniformly bounded family $\mathcal{H}$ of functions $h: \mathcal X \times \mathcal (B_1(\mathcal X))^{2 k} \to \mathbb R$.
\end{lemma}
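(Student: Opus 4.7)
The plan is to prove \eqref{eq:xconv} by induction on $t \in \mathcal{T}$, and then deduce \eqref{eq:xmuconv} by combining it with a bound on the difference between the finite-sample and limiting neighborhood state distributions. For the base case $t = 0$, both $X^i_0$ and $\hat X^{i/N}_0$ are drawn from $\mu_0$, so the two expectations coincide trivially and the bound holds for any family $\mathcal G$.

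For the inductive step I would expand $\E[g(X^i_{t+1})]$ through the Markov kernel $P(\cdot \mid X^i_t, U^i_t, \mathbb G^i_t)$ and analogously expand $\E[g(\hat X^{i/N}_{t+1})]$, then interpolate by successively replacing: (i) the empirical neighborhood distribution $\mathbb G^i_t$ by $\mathbb{G}_t^{i/N, \boldsymbol{\mu}^N_t, \mathbf W_N}$ via a law-of-large-numbers argument conditional on the other agents' states; (ii) $\mathbf W_N$ by $\mathbf W$ using Assumption~\ref{ass:W} in the $L_\infty \to L_1$ form; (iii) $\boldsymbol{\mu}^N_t$ by $\boldsymbol{\mu}_t$ using Theorem~\ref{thm:muconv} applied to the uniformly bounded test functions $f_{x',\alpha,h}(x,\beta) = W^h(\alpha,\beta) \, \mathbf 1_{\{x = x'\}}$ and their reversed counterparts; (iv) the discretized policy $\Gamma_N(\boldsymbol\pi, i, \hat\pi)^j_t$ on neighbors $j$ by $\pi^{j/N}_t$, noting that the single deviating agent contributes only $O(1/N)$ to every neighborhood distribution. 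Lipschitz continuity of $P$ (Assumption~\ref{ass:Lip}) turns each of these perturbations of $\mathbb G^i_t$ into a bounded perturbation of the transition probability, and the induction hypothesis applied to the updated integrand $x \mapsto \sum_u \pi^{i/N}_t(u \mid x) \sum_{x'} P(x' \mid x, u, \cdot)\, g(x')$ closes the step.

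For \eqref{eq:xmuconv}, uniform Lipschitz continuity of $h$ reduces the problem to bounding both $\E\bigl\| \mathbb{G}_t^{i/N, \boldsymbol{\mu}^N_t, \mathbf W_N} - \mathbb{G}_t^{i/N, \boldsymbol{\mu}_t, \mathbf W}\bigr\|$ and $\sup_{g \in \mathcal G} |\E[g(X^i_t)] - \E[g(\hat X^{i/N}_t)]|$. The former is supplied by combining steps (i)--(iii) above with Theorem~\ref{thm:muconv}, and the latter is \eqref{eq:xconv} itself applied to the finite family of indicator functions on $\mathcal X$.

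The main obstacle is the loss of uniformity in step (iv), caused by (a) the finitely many discontinuities of $\boldsymbol\pi$ allowed by Assumption~\ref{ass:pi_lipschitz}, near which the piecewise-constant $\Gamma_N(\boldsymbol\pi)$ does not approximate $\boldsymbol\pi$ pointwise, and (b) the deviating agent $i$ itself. I would therefore take $V_N \setminus V'_N$ to consist of all indices $j$ whose interval $((j-1)/N, j/N]$ intersects a shrinking neighborhood of a discontinuity of $\boldsymbol\pi$, together with the replaced index. Since $\boldsymbol\pi$ has only finitely many discontinuities, these neighborhoods can be chosen small enough to guarantee $|V_N \setminus V'_N| < pN$ for $N$ sufficiently large. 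For $j \in V'_N$ the Lipschitz part of $\boldsymbol\pi$ yields $\sup_{\beta \in ((j-1)/N, j/N]} \|\pi^{j/N}_t - \pi^\beta_t\|_1 \to 0$, which is exactly what step (iv) requires; uniformity over $\hat\pi \in \Pi$ is automatic, since $\hat\pi$ affects only the law of agent $i$ and contributes only $O(1/N)$ to every neighborhood distribution.
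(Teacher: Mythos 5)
Your overall architecture matches the paper's: \eqref{eq:xconv} is obtained by the induction/interpolation argument that the paper imports from the cited GMFG work, and \eqref{eq:xmuconv} is reduced to \eqref{eq:xconv} plus control of the neighborhood discrepancy, with the genuinely new CDMFG terms handled by applying Theorem~\ref{thm:muconv} to test functions of the form $W^h(\alpha,\beta)\,\mathbf 1_{\{x=x'\}}$ and their reversed counterparts --- this is exactly the paper's decomposition into the three comparisons ($\boldsymbol\mu^N_t$ vs.\ $\boldsymbol\mu_t$, $\mathbf W_N$ vs.\ $\mathbf W$, and $X^i_t$ vs.\ $\hat X^{i/N}_t$).

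There is, however, a genuine gap in your construction of the exceptional set $V'_N$. You attribute the loss of uniformity solely to the finitely many discontinuities of $\boldsymbol\pi$ and to the deviating agent, and you treat the replacement of $\mathbf W_N$ by $\mathbf W$ (your step (ii), and the corresponding summand in your bound for \eqref{eq:xmuconv}) as uniformly valid over all remaining indices. It is not: Assumption~\ref{ass:W} gives $\Vert W^h_{G_N}-W^h\Vert_{L_\infty\to L_1}\to 0$, which controls $\int_{\mathcal I}\left|\int_{\mathcal I} (W^h_{G_N}-W^h)(\alpha,\beta)\,\mu^\beta_t(x)\,\mathrm d\beta\right|\mathrm d\alpha$, i.e.\ an average over $\alpha$, but says nothing about the particular row $\alpha=i/N$ of a given agent: a single vertex of $G_N$ can have a neighborhood entirely unlike $W^h(i/N,\cdot)$ while contributing only $O(1/N)$ to the cut norm, so row-wise convergence may fail for a vanishing-but-nonzero fraction of agents no matter how Lipschitz $\mathbf W$ and $\boldsymbol\pi$ are. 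This is precisely the reason the lemma (and the $(\epsilon,p)$-MNE concept) only covers $\lfloor (1-p)N\rfloor$ agents: $V'_N$ must be selected by a Markov-inequality/measure argument that excludes the indices whose rows of $\mathbf W_{G_N}$ deviate too much, in addition to the indices you exclude near the discontinuities of $\boldsymbol\pi$ and the deviator. With that correction, the rest of your plan (base case, kernel interpolation, Lipschitz reduction for $h$, the $O(1/N)$ effect of the deviating agent) goes through along the same lines as the paper's argument.
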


\begin{proof}
The first statement \eqref{eq:xconv} can be proven with an argument in \cite[Proof of Lemma A.1]{cui2021learning}. To show that \eqref{eq:xmuconv} also holds, we will establish that \eqref{eq:xconv} implies \eqref{eq:xmuconv}.
Thus, let $\mathcal H$ be defined as in Lemma~\ref{lem:xconv} with uniform Lipschitz constant $L_h$ and bound $M_h$. For any $h \in \mathcal{H}$, we have
\begin{align*}
    &\left| \E \left[ h(X^i_{t}, \mathbb{G}_t^{\frac i N, \boldsymbol{\mu}_t^N, \mathbf{W}_N}) \right] - \E \left[ h(\hat X^{\frac i N}_{t}, \mathbb{G}_t^{\frac i N, \boldsymbol{\mu}_t, \mathbf{W}}) \right] \right| \\
    &\quad \leq \left| \E \left[ h(X^i_{t}, \mathbb{G}_t^{\frac i N, \boldsymbol{\mu}_t^N, \mathbf{W}_N}) \right] - \E \left[ h(X^i_{t}, \mathbb{G}_t^{\frac i N, \boldsymbol{\mu}_t, \mathbf{W}_N}) \right] \right| \\
    &\qquad + \left| \E \left[ h(X^i_{t}, \mathbb{G}_t^{\frac i N, \boldsymbol{\mu}_t, \mathbf{W}_N}) \right] - \E \left[ h(X^i_{t}, \mathbb{G}_t^{\frac i N, \boldsymbol{\mu}_t, \mathbf{W}}) \right] \right| \\
    &\qquad + \left| \E \left[ h(X^i_{t}, \mathbb{G}_t^{\frac i N, \boldsymbol{\mu}_t, \mathbf{W}}) \right] - \E \left[ h(\hat X^{\frac i N}_{t}, \mathbb{G}_t^{\frac i N, \boldsymbol{\mu}_t, \mathbf{W}}) \right] \right|.
\end{align*}
While the third term can be bounded as in \cite[Proof of Lemma A.1]{cui2021learning}, the first two summands converge to zero by the following argument. Starting with the first term, we obtain
\begin{align*}
    &\left| \E \left[ h(X^i_{t}, \mathbb{G}_t^{\frac i N, \boldsymbol{\mu}_t^N, \mathbf{W}_N}) \right] - \E \left[ h(X^i_{t}, \mathbb{G}_t^{\frac i N, \boldsymbol{\mu}_t, \mathbf{W}_N}) \right] \right| \\
    & \leq \E \left[ \E \left[ \left| h(X^i_{t}, \mathbb{G}_t^{\frac i N, \boldsymbol{\mu}_t^N, \mathbf{W}_N}) - h(X^i_{t}, \mathbb{G}_t^{\frac i N, \boldsymbol{\mu}_t, \mathbf{W}_N}) \right| \innermid X^i_{t} \right] \right] \\
    &\quad \leq k L_h \sum_{x \in \mathcal X} \E \left[  \max_{j \in \cbc{1, \ldots, k}}\left| (W^j_N \mu^{N}_t - W^j_N \mu_t) (\frac i N, x)  \right|  \right. \\
    &\qquad \left. +   \max_{j \in \cbc{1, \ldots, k}}\left| (\mu^{N}_t W^j_N - \mu_t W^j_N) (\frac i N, x)  \right|  \right]
    = o(1)
\end{align*}
where we apply Theorem~\ref{thm:muconv} to the uniformly bounded functions $f_{N,i,x}'(x', \beta) = W^j_N(\frac i N, \beta) \cdot \mathbf 1_{x=x'}$  and $f_{N,i,x}'' (x', \beta) = W^j_N(\beta, \frac i N) \cdot \mathbf 1_{x=x'}$. For the second summand, we have
\begin{align*}
    &\left| \E \left[ h(X^i_{t}, \mathbb{G}_t^{\frac i N, \boldsymbol{\mu}_t, \mathbf{W}_N}) \right] - \E \left[ h(X^i_{t}, \mathbb{G}_t^{\frac i N, \boldsymbol{\mu}_t, \mathbf{W}}) \right] \right| \\
    &\quad \leq k L_h \sum_{x \in \mathcal X}  \max_{j \in \cbc{1, \ldots, k}}\left|   (W^j_N\mu_t - W^j\mu_t)(\frac i N, x)   \right|\\
    &\qquad + k L_h \sum_{x \in \mathcal X}  \max_{j \in \cbc{1, \ldots, k}}\left|   (\mu_t W^j_N - \mu_t W^j)(\frac i N, x)   \right|
\end{align*}
where each of the two summands converges to zero by an argument as in \cite[Proof of Lemma A.1]{cui2021learning}.
\end{proof}

This result implies that the objective functions of almost all agents converge uniformly to the mean field objectives.

\begin{corollary} \label{coro:jconv}
Assume that Assumptions~\ref{ass:W}, \ref{ass:pi_lipschitz}, and \ref{ass:Lip} hold with $\boldsymbol \pi \in \boldsymbol \Pi$ and $\boldsymbol \mu = \Psi(\boldsymbol \pi)$. For $\Gamma_N(\boldsymbol \pi), \Gamma (\boldsymbol{\pi}, i, \hat \pi) \in \Pi^N$ with arbitrary $\hat \pi \in \Pi$, for any $\varepsilon, p > 0$, there exists $N' \in \mathbb N$ such that for all $N > N'$ we have
\begin{align} \label{eq:Jconv}
    \left| J_i^N(\pi^1, \ldots, \pi^{i-1}, \hat \pi, \pi^{i+1}, \ldots, \pi^N) - J^{\boldsymbol \mu}_{\frac i N}(\hat \pi) \right| < \varepsilon
\end{align}
uniformly over $\hat \pi \in \Pi, i \in V'_N$ for some $ V'_N \subseteq  V_N$ with $|V'_N| \geq \left\lfloor (1-p) N \right\rfloor$.
\end{corollary}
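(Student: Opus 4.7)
The plan is to reduce the statement to a direct application of Lemma~\ref{lem:xconv}. By the triangle inequality,
\begin{align*}
    &\left| J_i^N(\pi^1, \ldots, \pi^{i-1}, \hat \pi, \pi^{i+1}, \ldots, \pi^N) - J^{\boldsymbol \mu}_{i/N}(\hat \pi) \right| \\
    &\quad \leq \sum_{t \in \mathcal T} \left| \E\!\left[ r(X_t^i, U_t^i, \mathbb G_t^i) \right] - \E\!\left[ r(\hat X_t^{i/N}, \hat U_t^{i/N}, \mathbb G_t^{i/N, \boldsymbol \mu, \mathbf W}) \right] \right|,
\end{align*}
so since $|\mathcal T| = T < \infty$ it suffices to make each of the $T$ summands smaller than $\varepsilon/T$ on a common ``good'' set of indices.

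For fixed $t$, I would integrate out the action. Since agent $i$ in the perturbed $N$-agent system samples $U_t^i \sim \hat \pi_t(\cdot \mid X_t^i)$ and the appendix's hat process samples $\hat U_t^{i/N} \sim \hat \pi_t(\cdot \mid \hat X_t^{i/N})$, the tower property identifies the $t$-th summand with $\left| \E[h_{\hat \pi, t}(X_t^i, \mathbb G_t^{i/N, \boldsymbol \mu^N_t, \mathbf W_N})] - \E[h_{\hat \pi, t}(\hat X_t^{i/N}, \mathbb G_t^{i/N, \boldsymbol \mu_t, \mathbf W})] \right|$ for
\begin{align*}
    h_{\hat \pi, t}(x, \mathbb G) \coloneqq \sum_{u \in \mathcal U} \hat \pi_t(u \mid x) \, r(x, u, \mathbb G).
\end{align*}
The family $\mathcal H \coloneqq \{h_{\hat \pi, t} : \hat \pi \in \Pi,\, t \in \mathcal T\}$ is uniformly bounded and uniformly Lipschitz in its second argument: each of the $2k$ components of $\mathbb G$ is a measure on $\mathcal X$ of total mass at most one, so $r$ is bounded by some $M_r$ and $L_r$-Lipschitz on this compact domain by Assumption~\ref{ass:Lip}, and both constants transfer to $h_{\hat \pi, t}$.

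This places $\mathcal H$ precisely within the scope of the second conclusion \eqref{eq:xmuconv} of Lemma~\ref{lem:xconv}. Applying that lemma to $\mathcal H$ with tolerance $\varepsilon/T$ and exception fraction $p/(2T)$ yields, for each $t$, an index $N'_t$ and a set $V'_{N,t}$ with $|V'_{N,t}| \geq \lfloor (1 - p/(2T)) N \rfloor$ on which the $t$-th summand is below $\varepsilon/T$ uniformly over $\hat \pi \in \Pi$. Setting $N' \coloneqq \max_{t \in \mathcal T} N'_t$ and $V'_N \coloneqq \bigcap_{t \in \mathcal T} V'_{N,t}$, a union bound on complements gives $|V_N \setminus V'_N| \leq T \lceil p N / (2T) \rceil \leq \lceil p N \rceil$ whenever $N \geq 2T/p$, so $|V'_N| \geq \lfloor (1-p) N \rfloor$ as required. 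No real obstacle remains: all the substantive work of comparing the coupled finite and limiting dynamics is already carried out in Lemma~\ref{lem:xconv}, and Corollary~\ref{coro:jconv} is essentially a uniform-in-$\hat \pi$, uniform-in-$i$ bookkeeping argument over the finite horizon $\mathcal T$.
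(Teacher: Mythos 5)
Your proposal is correct and follows essentially the same route the paper intends: the paper's proof simply defers to the analogous argument in \cite[Proof of Corollary A.1]{cui2021learning}, which likewise decomposes the objective difference over the finite horizon, absorbs the action via $h_{\hat\pi,t}(x,\mathbb G)=\sum_{u}\hat\pi_t(u\mid x)\,r(x,u,\mathbb G)$, and applies the second statement \eqref{eq:xmuconv} of Lemma~\ref{lem:xconv} to this uniformly bounded, uniformly Lipschitz family. Your additional bookkeeping (tolerance $\varepsilon/T$, exception fraction $p/(2T)$, intersecting the good index sets) is a valid way to make the uniformity over $t$, $\hat\pi$, and $i$ explicit.
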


\begin{proof}
The proof is analogous to \cite[Proof of Corollary A.1]{cui2021learning}.
\end{proof}

\end{document}